\tikzstyle{start} = [rectangle, rounded corners, minimum width=3cm, minimum height=1cm,text centered, draw=black]
\tikzstyle{stopno} = [rectangle, rounded corners, minimum width=3cm, minimum height=1cm,text centered, draw=red]
\tikzstyle{stopyes} = [rectangle, rounded corners, minimum width=3cm, minimum height=1cm,text centered, draw=green]
\tikzstyle{io} = [trapezium, trapezium left angle=70, trapezium right angle=110, minimum width=3cm, minimum height=1cm, text centered, draw=black]
\tikzstyle{process} = [rectangle, minimum width=3cm, minimum height=1cm, text centered, draw=black]
\tikzstyle{decision} = [diamond, minimum width=3cm, minimum height=1cm, aspect = 2,draw=black]
\tikzstyle{arrow} = [thick,->,>=stealth]
\newtheorem{thm}{\bf Theorem}[section]
\newtheorem{lem}[thm]{\bf Lemma}
\newtheorem{prp}[thm]{\bf Proposition}
\newtheorem{exmp}[thm]{\bf Example}
\newtheorem{cor}[thm]{\bf Corollary}
\newcommand{\cis}[1][C]{\ensuremath{\mathbb{#1}}}
\newcommand{\av}[1]{\ensuremath{\mathcal{#1}}}
\newcommand{\vek}[1][h]{\ensuremath{\mathbf{#1}}}
\newcommand{\f}[1]{\mathbf{#1}}
\newcommand{\pr}[1]{\ensuremath{\mathbb{P}^{#1}_{\cis}}}
\newcommand{\euR}[1]{\ensuremath{\mathbb{E}^{#1}_{\cis[R]}}}
\newcommand{\genus}{\ensuremath{\mathrm{g}}}
\newcommand{\pic}{\ensuremath{\mathrm{Cl}}}
\newcommand{\cyc}{\ensuremath{\mathrm{c}}}
\journal{arXiv}
\begin{document}

\sloppy

\begin{frontmatter}

\title{Recognizing implicitly given rational canal surfaces}

\author[plzen1]{Jan Vr\v{s}ek}
\ead{vrsek@ntis.zcu.cz}

\author[plzen2,plzen1]{Miroslav L\'avi\v{c}ka\corref{cor1}}
\cortext[cor1]{Corresponding author}
\ead{lavicka@kma.zcu.cz}

\address[plzen1]{NTIS -- New Technologies for the Information Society, Faculty of Applied Sciences, University of West Bohemia,
         Univerzitn\'i 8, 301 00 Plze\v{n}, Czech~Republic}

\address[plzen2]{Department of Mathematics, Faculty of Applied Sciences, University of West Bohemia,
         Univerzitn\'i~8,~301~00~Plze\v{n},~Czech~Republic}

\begin{abstract}
It is still a challenging task of today to recognize the type of a given algebraic surface which is described only by its implicit representation. In~this paper we will investigate in more detail the case of canal surfaces that are often used in geometric modelling, Computer-Aided Design and technical practice (e.g. as blending surfaces smoothly joining two parts with circular ends). It is known that if the squared medial axis transform is a rational curve then so is also the corresponding surface. However, starting from a polynomial it is not known how to decide if the corresponding algebraic surface is rational canal surface or not. Our goal is to formulate a simple and efficient algorithm whose input is a~polynomial with the coefficients from some subfield of $\cis[R]$ and the output is the answer whether the surface is a rational canal surface. In the affirmative case we also compute a rational parameterization of the squared medial axis transform which can be then used for finding a rational parameterization of the implicitly given canal surface.
\end{abstract}

\begin{keyword}
Canal surfaces \sep algebraic surfaces \sep surface recognition \sep rational surfaces \sep medial axis transform
\end{keyword}

\end{frontmatter}

\section{Introduction and related work}\label{sec intro}

In this paper we will pay an attention to the so called canal surfaces. These surfaces, which are defined as envelopes of moving spheres in 3-space, are very popular in Computer-Aided Design as they are often used as blending surfaces between the parts with circular ends. It was proved in \cite{PePo97} that any canal surface with a rational spine curve (a set of all centers of moving spheres) and a rational radius function possesses a rational parameterization. An algorithm for generating rational parameterizations of canal surfaces was developed and investigated in \cite{LaSchWi01}. The class of rational canal surfaces with a rational spine curve and  a rational radius function is a proper subset of the class of rational canal surfaces -- all canal surfaces with  a rational spine curve and rational squared radius function admit a rational parameterization, cf.  \cite{Pe98,BaJuLaSchSi14}.
The reverse problem (a rational parametric description of a curve or a surface is given, find the corresponding implicit equation) is called the implicitization problem. An algorithm for computing the implicit equation of a canal surface generated by a rational family of spheres is presented in \cite{DoZu09}.

However our goal is different  -- we start with an~implicit representation (i.e., with some polynomial in $x,y,z$) and want to decide if the corresponding algebraic surface is a rational canal surface or not. This is still a challenging problem, only partially solved in the recent past for a subfamily of canal surfaces, namely for the surfaces of revolution -- see \cite{VrLa14}. Moreover, in case of the positive answer we want to compute the equation of the spine curve and also the radius function. We would like to emphasize that this study is interesting not only from the theoretical point of view but it also reflects a need of the real-world applications as the results of many geometric operations are often described only implicitly. Then it is a challenging task to recognize the type of the obtained surface, find its characteristics and for the rational surfaces compute also their parameterizations. This is needed e.g. when the implicit blend surfaces (often of the canal-surface type) are constructed, see \cite{HoHo85,Ro89, Ha90,Ha01}.

\medskip
Now, we start with short recalling some elementary notions. Let $\euR{3}$ be Euclidean 3-space equipped with the Cartesian coordinates. A point $\vek[x]$ is represented
with respect to a coordinate system by a vector $(x, y, z)$, and we do not distinguish between the point and its coordinate vector.
Points in the projective closure of $\euR{3}$ will be described using standard homogeneous coordinates
\begin{equation}
(W:X:Y:Z) = (1:x:y:z).
\end{equation}
The equation $W=0$ describes the {\em ideal plane} as the set of all asymptotic directions, i.e.,
of points at infinity. The subset of the ideal plane which is invariant with respect to all similarities is called the {\em absolute conic section} $\Omega$ and characterized by
\begin{equation}
\Omega:\,\, X^2+Y^2+Z^2 = W=0,
\end{equation}
consisting of solely imaginary points. Circles are such conic sections which intersect $\Omega$ in two distinct points. Moreover,  circles lying in parallel planes possess the same points on $\Omega$.

\begin{figure}[t]
\begin{center}
   {\psfrag{X}{}
   \psfrag{Sigma}{$\Sigma(t)$}
   \psfrag{s}{$\f s(t)$}
   \psfrag{ds}{$\f s'(t)$}
   \psfrag{r}{$r(t)$}
   \psfrag{p}{$\f s$}
   \psfrag{Cp}{$\av{C}_{\f s}$}
   \includegraphics[width=0.4\textwidth]{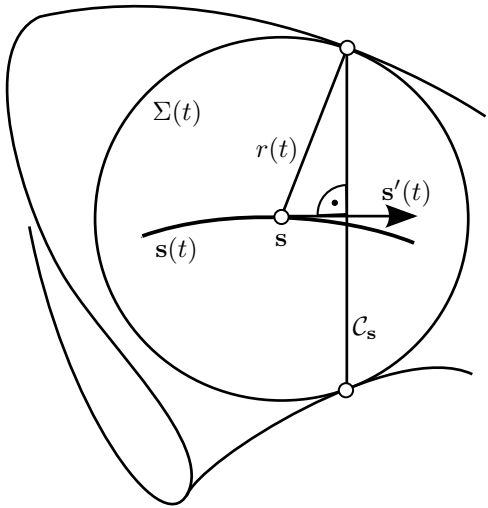}
   }
\begin{minipage}{0.9\textwidth}
\caption{A canal surface with the spine curve $\f s(t)$ and the radius function $r(t)$ as the envelope of spheres $\Sigma(t)$. $\av{C}_{\f s}$ is the characteristic
circle corresponding to the point $\f s$ on the spine curve. \label{canal}}
\end{minipage}
\end{center}
\end{figure}

A {\em canal surface} is defined as the envelope of a one-parameter family of spheres $\Sigma(t)$ whose centers trace a curve $\av{S}$ in 3-dimensional space parameterized by $\vek[s](t)$ and possess radii $r(t)$, i.e.,
\begin{equation}\label{oneparfamily}
\Sigma(t):\, |(x,y,z)^T-\vek[s](t)|^2-r(t)^2= 0.
\end{equation}
The curve $\av{S}$ is called the {\em spine curve} and $r(t)$ the {\em radius function} of the canal surface. For constant $r(t)$ we obtain a {\em pipe surface}, for  $\av{S}$ being a straight line we arrive at a {\em surface of revolution}.

By appending the corresponding sphere radii $r$ to the points of the spine curve (or the {\em skeleton}, or the {\em medial axis}) we obtain the {\em medial axis transform} (shortly  MAT), i.e., a curve $\av{M}$ in 4-dimensional space. In addition, we define the curve $\av{M}^2$ (and called a squared MAT) which is obtained by appending the squared sphere radii $R(t)$ (i.e., $r(t)=\sqrt{R(t)}$) to the points of the spine curve. The rationality of canal surfaces is described by the following proposition, cf. \cite{BaJuLaSchSi14}:

\begin{prp}
Any canal surface with the corresponding squared medial axis transform $\av{M}^2$ possesses a~rational parameterization if and only if $\av{M}^2$ is a rational curve.
\end{prp}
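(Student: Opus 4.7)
The plan is to prove the two implications separately; the substantive content lies in the ``if'' direction.

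\textbf{Only if.} I would exploit the foliation of $\Phi$ by its characteristic circles. Assigning to a generic point $P\in\Phi$ the pair $(\f s,R)$ that identifies the unique sphere of the family tangent to $\Phi$ at $P$ defines a rational fibration $\pi\colon\Phi\dashrightarrow\av{M}^2$ whose fibers are exactly the circles $\av{C}_{\f s}$. The map is rational because $\f s$ lies on the surface normal at $P$ (since the tangent plane of $\Phi$ and that of $\Sigma(t)$ coincide at $P$), the normal is rationally recoverable from any rational parameterization $\f x(u,v)$ via the cross product of partial derivatives, and the correct scalar offset is determined by the envelope condition. If $\Phi$ is rational, the composite $\pi\circ \f x\colon \cis^2\dashrightarrow\av{M}^2$ is then a dominant rational map from a rational surface to the curve $\av{M}^2$, so $\av{M}^2$ is unirational, and by L\"uroth's theorem it is rational.

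\textbf{If.} Suppose $\av{M}^2$ is parameterized rationally by $t\mapsto(\f s(t),R(t))$. Differentiating \eqref{oneparfamily} with respect to $t$ produces the characteristic plane $2(\f x-\f s)\cdot \f s'+R'=0$, whose intersection with $\Sigma(t)$ is the characteristic circle $\av{C}_{\f s(t)}$. Its center $\f c(t)=\f s(t)-\frac{R'(t)}{2|\f s'(t)|^2}\f s'(t)\in \cis(t)^3$ and its squared radius $\rho(t)^2=R(t)-|\f c(t)-\f s(t)|^2\in\cis(t)$ are therefore both rational in $t$. The remaining task, sweeping this family of circles rationally by a second parameter $v$, is the main technical hurdle, because $\rho(t)$ itself is in general not rational even though $\rho(t)^2$ is, so a naive ``center $+$ radius $\times$ unit frame'' parameterization fails.

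This obstacle is overcome by invoking the observation emphasized in the preliminary discussion of the paper: every circle meets the absolute conic $\Omega$ in a pair of ideal points, determined purely by the plane of the circle, hence depending rationally on $t$ as an unordered pair (equivalently, via their elementary symmetric functions over $\cis(t)$). Stereographic projection of $\av{C}_{\f s(t)}$ from this rationally varying conjugate pair, symmetrized under the Galois action that exchanges the two points, yields a rational parameterization $\f x(t,v)$ of $\Phi$; intuitively, the conformal data carried by $\Omega$ supplies the ``rational point at infinity'' on each circle that one would otherwise have to extract algebraically via $\sqrt{\rho^2}$. I expect the main difficulty of a formal write-up to be verifying that this symmetrization descends to a genuinely $\cis(t)$-rational expression for $\f x(t,v)$, which is precisely where the structural distinction between $R(t)$ being rational and $r(t)$ being rational becomes essential.
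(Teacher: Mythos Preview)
The paper does not prove this proposition; it is quoted from \cite{BaJuLaSchSi14} (with \cite{Pe98} in the background), so there is no in-paper argument to compare your sketch against. What follows is an assessment of the proposal on its own merits.

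Your ``only if'' direction is essentially correct. The rationality of $\pi$ can be justified more cleanly than via ``the envelope condition'': the incidence correspondence $\{(P,m)\in\av{X}\times\av{M}^2 : P\in\av{C}_m\}$ is algebraic, its first projection to $\av{X}$ is birational because a generic surface point lies on a unique characteristic circle, and the second projection then furnishes the rational map $\av{X}\dashrightarrow\av{M}^2$. Composing with a rational parameterization of $\av{X}$ and invoking L\"uroth is fine.

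Your ``if'' direction has a real gap at the symmetrization step. The two isotropic points $I_1(t),I_2(t)\in\av{C}_{\f s(t)}\cap\Omega$ are individually defined only over $\cis(t)\bigl[\sqrt{|\f s'(t)|^2}\,\bigr]$, as you note. Stereographic projection from $I_1$ parameterizes $\av{C}_{\f s(t)}$ over that quadratic extension, and its Galois conjugate is the projection from $I_2$; but these two parameterizations differ by a M\"obius reparameterization in the circle variable $v$, and there is no general mechanism by which ``symmetrizing under the Galois action'' collapses a conjugate pair of degree-one maps $\pr{1}\to\av{C}_{\f s(t)}$ into a single $\cis(t)$-rational one. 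Averaging the two image points leaves the conic; passing to Galois-invariant functions of $v$ gives a $2{:}1$ map rather than a birational one. The construction, as stated, does not close, and you yourself flag this as the unverified step.

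The proofs in the cited sources take a different route. Abstractly, the generic fibre $\av{C}_{\f s(t)}$ is a smooth conic over the $C_1$-field $\cis(t)$, hence has a $\cis(t)$-rational point by Tsen's theorem, from which one projects. Constructively, \cite{Pe98} and \cite{BaJuLaSchSi14} manufacture an explicit rational section of the circle bundle by producing rational data equivalent to a Pythagorean-hodograph--type frame along the spine, or by lifting $\av{M}^2$ to a rational curve on an auxiliary quadric. In either version the isotropic points on $\Omega$ play no role in locating the section, so the intuition that $\Omega$ ``supplies the rational point at infinity on each circle'' is not what actually drives the argument.
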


The implicit equation $f(x,y,z) =0$ of the canal surface given by \eqref{oneparfamily} can be obtained by eliminating the parameter $t$ from the defining equations
\begin{equation}\label{eq canal_eq}
\Sigma(t)=0,\quad \Sigma'(t)=0,
\end{equation}
where $\Sigma'$ is the derivative of $\Sigma$ with respect to $t$. The equation $\Sigma'(t)=0$ describes the plane with the normal vector ${\f s'(t)}$, i.e., perpendicular to the spine curve $\mathbf{s}(t)$. Thus the canal surface contains a one-parameter set of the so called {\em characteristic circles} $\av{C}_{\vek[s](t)}=\Sigma(t)\cap \Sigma'(t)$, which shows that the canal surfaces are generated by one-parameter family of circles.


We recall a fundamental property of canal surfaces
\begin{prp}
 Let  be given a canal surface  with the spine curve $\av{S}$. Then the normal lines through its non-singular points lying on the same characteristic circle $\av{C}_{\vek[s]}$ intersect the spine $\av{S}$ in the    point $\vek[s]$.
\end{prp}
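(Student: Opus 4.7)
The plan is to exploit the envelope characterization of the canal surface: by construction, along each characteristic circle $\av{C}_{\f s(t)}$ the canal surface is tangent to the sphere $\Sigma(t)$, so the surface normal at any smooth point of that circle must coincide with the sphere normal, which automatically passes through the sphere center $\f s(t)$.

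I would make this precise using the defining equations \eqref{oneparfamily} and \eqref{eq canal_eq}. Near a non-singular point $\vek[x] \in \av{C}_{\f s(t_0)}$, assuming the implicit function theorem applies (so that $\Sigma'(t)(\vek[x])=0$ can be solved locally for $t=t(\vek[x])$), the implicit equation of the canal surface can be written as $f(\vek[x]) = \Sigma\bigl(t(\vek[x])\bigr)(\vek[x])$. Differentiating and using the chain rule gives
\begin{equation*}
\nabla f(\vek[x]) \;=\; \nabla_{\vek[x]} \Sigma(t)\bigr|_{t=t(\vek[x])} \;+\; \Sigma'(t)(\vek[x])\,\nabla t(\vek[x]),
\end{equation*}
and the second summand vanishes since $\Sigma'(t)(\vek[x])=0$ on the characteristic circle. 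Therefore $\nabla f(\vek[x]) = \nabla_{\vek[x]} \Sigma(t) = 2\bigl(\vek[x]-\f s(t)\bigr)$, which is the radial direction of the sphere $\Sigma(t)$ at $\vek[x]$.

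Consequently, the surface normal at $\vek[x]$ is the line through $\vek[x]$ with direction $\vek[x]-\f s(t)$, i.e., the line joining $\vek[x]$ to the spine point $\f s(t)$. Since every point on the same characteristic circle $\av{C}_{\f s(t)}$ shares the same value of $t$, all such normals pass through the common point $\f s(t)\in\av{S}$, which is the claim. The cases where the local inversion $t=t(\vek[x])$ fails can be covered by working directly with the elimination ideal or by a density argument, but they correspond precisely to singular points of the surface and are therefore excluded by hypothesis.

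The only real obstacle is the technical step of justifying $\nabla f = \nabla_{\vek[x]}\Sigma$ at the level of the eliminant, which is where the non-singularity assumption is used; once that is in place, the geometric conclusion is immediate from the fact that radii of a sphere pass through its center.
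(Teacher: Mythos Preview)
The paper does not actually supply a proof of this proposition: it is introduced with the phrase ``We recall a fundamental property of canal surfaces'' and is stated without any accompanying argument. So there is nothing in the paper to compare your proof against.

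Your argument is the standard and correct derivation. One small remark: writing $f(\vek[x]) = \Sigma\bigl(t(\vek[x])\bigr)(\vek[x])$ as an equality of functions is a slight abuse, since the eliminant $f$ (obtained e.g.\ as a resultant) need not literally equal $\Sigma(t(\vek[x]))$ but only agree with it up to a nonvanishing factor on the relevant neighbourhood; this does not affect the \emph{direction} of the gradient at points of the zero set, which is all you need. Your final caveat is also accurate: the implicit-function step requires $\Sigma''(t_0)(\vek[x])\neq 0$, and the locus where $\Sigma=\Sigma'=\Sigma''=0$ is precisely the edge-of-regression singular locus of the envelope, so it is indeed excluded by the non-singularity hypothesis (or, as you note, handled by density). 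With those clarifications the proof is complete.
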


This enables us to formulate a naive approach to the recognition of rational canal surfaces. It suffices to find  parameterizations of two rational curves corresponding in parameter (this means that for any choice of the parameter, the corresponding points lie on the same characteristic circle). Hence the intersection of the~normals yields the rational parameterization of the spine curve and thus also of the squared radius function. Nevertheless it must be stressed out that finding rational curves on an~algebraic surface is a hard problem and moreover computing their parameterizations corresponding in parameter is another very difficult task. Hence we will develop a~different method, based on square-root parameterizations of plane sections of a given surface.

Finally, throughout the paper we assume that a~given real algebraic surface $\av{X}_{\cis[R]}$ is defined by the~real polynomial $f(x,y,z)$. When homogenizing it we arrive at the polynomial $F(W,X,Y,Z)=0$ which describes the projective closure of the surface. In addition, we assume that the surface is considered in the~complex extension, i.e., $\av{X}\subset\pr{3}$.


\section{Plane sections of rational canal surfaces}\label{sec preparation}

The simplest instances of canal surfaces are surfaces of revolution -- in this case the spine curve is just a straight line. An algorithm for recognizing surfaces of revolution and deciding on their rationality has been recently presented in \cite{VrLa14}. So, throughout this paper it is assumed that that $\av{X}$ is not a surface of revolution.

As recalled above, the canal surfaces contain a one-parameter family of characteristic circles. Each of these circles intersects the absolute conic section $\Omega$ in two points. As the spine curve is not a straight line, all these intersection points cover the whole absolute conic and thus $\Omega\subset\av{X}$. The multiplicity of $\Omega$ in $\av{X}$ is called the~\emph{cyclicity} of $\av{X}$ and denoted by $\cyc(\av{X})$.

\begin{prp}\label{prp non-zero cyclicity}
  Canal surface $\av{X}$ which is not a~surface of revolution has $c(\av{X})>0$.
\end{prp}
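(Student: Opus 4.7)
My plan is to argue precisely along the lines indicated in the paragraph immediately preceding the proposition: realize $\av{X}$ as swept by the characteristic circles $\av{C}_{\f s(t)}$, show each such circle contributes two points of $\Omega$, vary the circle's plane, and conclude that $\Omega$ lies set-theoretically on $\av{X}$. The extra observation needed at the end is that a curve lying inside a surface automatically contributes at least $1$ to the multiplicity of that curve in the surface, which is exactly what positive cyclicity means.

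First I would write $\av{X}=\bigcup_{t} \av{C}_{\f s(t)}$, using that the envelope is the closure of the union of the characteristic circles $\av{C}_{\f s(t)}=\Sigma(t)\cap\Sigma'(t)$ recalled in \eqref{eq canal_eq}. Since $\av{C}_{\f s(t)}$ is a genuine circle (a conic cutting $\Omega$ transversally), it meets the absolute conic in two, generically distinct, points $P^{\pm}(t)\in\Omega$. Thus every characteristic circle contributes two points of $\Omega$ to $\av{X}$.

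Next I need to rule out that all these intersection points $P^{\pm}(t)$ collapse to finitely many points of $\Omega$. Here I invoke the fact, mentioned in the preliminaries, that two circles meet $\Omega$ in the same pair of points if and only if their supporting planes are parallel. Since the plane of $\av{C}_{\f s(t)}$ is normal to $\f s'(t)$, having $P^{\pm}(t)$ constant on an open set of parameters would force $\f s'(t)$ to keep a constant direction, i.e., the spine $\av{S}$ to be a straight line. By assumption $\av{X}$ is not a surface of revolution, so $\av{S}$ is not a line; hence $P^{\pm}(t)$ varies with $t$ and the set $\{P^{\pm}(t)\mid t\}\subset \av{X}\cap\Omega$ is infinite.

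Finally, $\Omega$ is an irreducible conic in the plane $W=0$. An infinite subset of the irreducible curve $\Omega$ lying inside the algebraic surface $\av{X}$ forces $\Omega\subset \av{X}$; equivalently, the homogeneous polynomial $F(W,X,Y,Z)$ vanishes identically on $\Omega$, which, by the definition of cyclicity as the multiplicity of $\Omega$ in the surface, gives $\cyc(\av{X})\ge 1>0$.

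The only point that requires some care, and which I view as the main (mild) obstacle, is the justification that $P^{\pm}(t)$ actually \emph{moves} on $\Omega$: one must argue that the correspondence $t\mapsto \{P^{+}(t),P^{-}(t)\}$ is non-constant on a Zariski-open set, which is precisely the contrapositive of the statement that if the pair of absolute points of $\av{C}_{\f s(t)}$ were constant, then all characteristic planes would be parallel and hence $\f s'(t)$ would be a constant direction, making $\av{S}$ a line. Everything else is a combination of earlier propositions in the excerpt and the standard fact that an irreducible curve with infinitely many points in common with an algebraic surface is contained in it.
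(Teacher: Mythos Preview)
Your argument is correct and follows precisely the reasoning the paper gives in the paragraph immediately preceding the proposition (which serves as its informal proof): each characteristic circle meets $\Omega$ in two points, a non-linear spine forces these points to sweep out infinitely many points of $\Omega$, and irreducibility of $\Omega$ then yields $\Omega\subset\av{X}$, i.e., $\cyc(\av{X})\ge 1$. You have merely supplied the details the paper leaves implicit, in particular the contrapositive step linking constant absolute points to parallel characteristic planes and hence to a linear spine.
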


This simple criterion eliminates most of the candidates on canal surfaces because a~generic surface has $\mathrm{c}(\av{X})=0$. Now the remaining part of the method will be devoted to the reconstruction of the curve $\av{M}^2$ from the given polynomial $f(x,y,z)$. If it is not possible to reconstruct $\av{M}^2$ (see the details in Section~\ref{sec reconstruction}), or the canal surface $\av{Y}$ corresponding to the computed $\av{M}^2$ differs from $\av{X}$ then we can state that the given $\av{X}$ is not a~rational canal surface. Otherwise the output of the algorithm will be positive and $\av{M}^2$ may be then used to parameterize the surface, see \cite{LaSchWi01}.

\begin{lem}\label{lem hyperelliptic projection}
  Let $\av{X}$ be a rational canal surface with the spine curve $\av{S}$ and $\av{H}$ be a generic plane section. Then there exists a~rational mapping $\xi:\av{H\dashrightarrow S}$ such that its fiber at a generic point $\vek[s]\in\av{S}$ consists of the two points which are obtained as the intersection points of the plane containing $\av{H}$ and the characteristic circle $\av{C}_{\vek[s]}$. Such a mapping is called \emph{coherent}.
\end{lem}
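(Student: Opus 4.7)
The plan is to realize $\xi$ as the projection from an incidence variety that records, for each point of the surface, the parameter of the characteristic circle through it. Since $\av{X}$ is a rational canal surface, $\av{M}^2$ is a rational curve, so the spine $\av{S}$ admits a rational parametrization $\vek[s](t)$ with rational squared radius $R(t)=r(t)^2$. Using the equations $\Sigma(t)$ and $\Sigma'(t)$ from~\eqref{oneparfamily} and~\eqref{eq canal_eq}, set
\begin{equation}
I \;=\; \bigl\{(t,p)\in\pr{1}\times\pr{3} : \Sigma(t)(p)=0,\ \Sigma'(t)(p)=0\bigr\}.
\end{equation}
The fiber of the projection $\pi_t \colon I \to \pr{1}$ over $t$ is the characteristic circle $\av{C}_{\vek[s](t)}$, so the image of the projection $\pi_p \colon I \to \pr{3}$ is precisely $\av{X}$. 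Moreover $\pi_p$ is birational onto $\av{X}$, because $\av{X}$ is two-dimensional and is swept out by the one-parameter family of circles $\av{C}_{\vek[s](t)}$, so a generic point of $\av{X}$ lies on exactly one such circle.

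Let $H\subset\pr{3}$ be the plane containing $\av{H}$ and put $I_H = I \cap (\pr{1}\times H)$. Since $H$ is generic, it does not contain any characteristic circle and is not tangent to $\av{X}$ along a curve; hence $\pi_t|_{I_H}$ has either $0$ or $2$ points in each fiber over $\av{S}$, namely the two points of $H\cap \av{C}_{\vek[s](t)}$. The projection $\pi_p|_{I_H}\colon I_H\to \av{H}$ inherits birationality from $\pi_p$, since a generic point of $\av{H}\subset\av{X}$ still lies on only one characteristic circle. The composition
\begin{equation}
\xi \;=\; \pi_t \circ \bigl(\pi_p|_{I_H}\bigr)^{-1} \colon \av{H}\dashrightarrow \av{S}
\end{equation}
is then the required rational map, and by construction its fiber over a generic $\vek[s]=\vek[s](t_0)$ equals $H\cap \av{C}_{\vek[s](t_0)}$, which consists of the two asserted points.

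The main obstacle, and the only nontrivial verification, is the birationality of $\pi_p$: we must know that distinct characteristic circles generically do not pass through the same point of $\av{X}$. This is where the standing assumption from Section~\ref{sec preparation} that $\av{X}$ is not a surface of revolution is used, together with the fact that the family $\{\av{C}_{\vek[s](t)}\}$ foliates the irreducible surface $\av{X}$; otherwise two distinct $t$-values could produce coincident circles and $\pi_p$ would factor through a lower-degree cover. Once this is in hand, the fiber count and the rationality of $\xi$ follow immediately from the restriction to $H$ and from the genericity of $H$.
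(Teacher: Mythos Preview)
Your construction is essentially the paper's argument, recast in the language of an incidence correspondence. The paper takes a proper parametrization $\vek[x](t,s)$ of $\av{X}$, whose existence it cites from \cite{BaJuLaSchSi14,Pe98}, with the property that each $t$-curve is the characteristic circle for that $t$; it then defines $\xi$ as the restriction to $\av{H}$ of $\vek[s]\circ\vek[x]^{-1}$. Your $I$ is precisely (the closure of) the graph of such a parametrization, and your $\pi_t\circ(\pi_p|_{I_H})^{-1}$ is the same map up to the birational identification $\vek[s]:\pr{1}\dashrightarrow\av{S}$.

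The one soft spot is your justification of the birationality of $\pi_p$, which you correctly flag as the crux. Your dimension count only shows that $\pi_p$ is generically finite, not that its degree is one; the sentence ``the family $\{\av{C}_{\vek[s](t)}\}$ foliates the irreducible surface $\av{X}$'' is exactly the statement to be proved, not a hypothesis you may invoke. Moreover, the standing assumption that $\av{X}$ is not a surface of revolution is irrelevant here: on a surface of revolution the parallels still foliate the surface and the analogous $\pi_p$ is birational. The issue you allude to at the end, that distinct $t$-values might give coincident circles, is handled simply by taking the parametrization of $\av{M}^2$ to be proper; it does not address the genuine question of whether \emph{distinct} characteristic circles can pass through a generic point. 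The paper sidesteps all of this by citing that $\vek[x](t,s)$ can be chosen proper, which packages the birationality for free. If you want a self-contained argument you need to actually establish that a generic smooth point lies on a unique characteristic circle, which requires more than what you wrote.
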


\begin{proof}
  Let $(\vek[s](t),R(t))$ be a proper parameterization of $\av{M}^2$, the algorithms presented in   \cite{BaJuLaSchSi14,Pe98} allow to construct a proper parameterization $\vek[x](t,s)$ of the resulting canal surface (not necessarily with real coefficients) such that for a~fixed $t_0\in\cis$ the parametric curve $\vek[x](t_0,s)$ is a characteristic circle determined by the sphere $(\vek[x]-\vek[s](t_0))^2=R(t_0)$. The desired mapping $\xi$ is then the~restriction of the~rational map $\vek[s]\circ \vek[x]^{-1}:\av{X\dashrightarrow S}$ to the section $\av{H\subset X}$.
\end{proof}

Let $\av{H}$ be a generic irreducible plane section. Then the mapping $\xi$ from Lemma~\ref{lem hyperelliptic projection} is a double cover of a rational curve. As known $\av{H}$ has to be a rational, elliptic or hyperelliptic curve. These curves are birational to plane curves in the so called Weierstrass form $\av{E}:y^2-P(x)=0$, where $P(x)$ is a~square-free polynomial. The algorithms for finding the~Weierstrass form of elliptic and hyperelliptic curves, together with a~birational morphism $\sigma:\av{E\rightarrow H}$, were formulated in \cite{vH95} and \cite{vH02}, respectively.

A curve $\av{E}$ can be directly parameterized using square roots as $\vek[e]^\pm(t)=(t,\pm\sqrt{P(t)})$ and via $\sigma$ we arrive at the square root parameterization  $\vek[h]^\pm=\sigma\circ\vek[e]^\pm$ of $\av{H}$. Clearly, this is not a mapping as one parameter $t$ leads to two points $\sigma(t,\pm\sqrt{P(t)})$.

Two regular points on the canal surface are called \emph{corresponding} whenever they lie on the same characteristic circle. If $\sigma(t,\pm\sqrt{P})$ are corresponding points for a~generic $t$ then this square-root parameterization  may be used for computing the squared MAT. However this is not necessarily satisfied and thus we one has to be very careful when constructing required square-roots parameterizations.

Recall that any mapping $\av{H}\dashrightarrow\pr{1}$ of degree two, which
glues two corresponding points (i.e., points from $\av{H}\cap\av{C}_{\vek[s]}$), is called coherent. Such a mapping may be viewed as a kind of inverse to the desired square-root parameterization. Therefore in what follows we will go through all the three possible cases and we show how to construct this mapping.

\subsection*{Case I: $\genus(\av{H})=0$}

This case is exceptional as the only canal surfaces with rational generic sections can be surfaces of revolution which were excluded from further investigations at the beginning of the paper. The proof of this~statement in fact copies the classification of multiple conical surfaces with rational plane sections from \cite{Sch01}. The only difference is that in our situation we do not have to require that the surface contains more families of conics -- it is sufficient that it contains at least one family. Moreover these conics are circles in our case. So they intersect $\Omega$ in two points with the multiplicity $\cyc(\av{X})$. Exactly the same argument was used in \cite{Lu13} for the classification of the surfaces with more than one family of circles.

From these reasons, we do not go into details -- it is enough to recall only the notation and the results from the above mentioned papers and then to apply them to surfaces with one-parameter family of circles; for more details we refer to the original sources.

The linear normalization  $\overline{\av{X}}$ of the surface $\av{X}$ with rational plane sections is either a ruled surface $\av{R}_{n,m}\subset\pr{2m+n+1}$, or the Veronese surface $\av{V}\subset\pr{5}$. Recall that the Veronese surface is given by the parameterization
\begin{equation}
    (1:s:t:s^2:st:t^2)
\end{equation}
with the class group of divisors modulo linear equivalence $\pic(\av{V})$ generated by $L$ (the divisor $t=0$) with the intersection product $L^2=1$. The ruled surfaces $\av{R}_{n,m}$ are parameterized by
  \begin{equation}
    (1:t:\dots:t^{m+n}:s:st:\dots:st^m)
  \end{equation}
with the ruling $P$ and the cross section $B$ generating $\pic(\av{R}_{n,m})$ where $P^2=0$, $P\cdot B=1$ and $B^2=-n$.

\begin{thm}\label{thm rational section then sor}
  Rational canal surface $\av{X}$ with rational generic plane section is a surface of revolution.
\end{thm}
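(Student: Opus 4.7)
The plan is to combine the classification of surfaces with rational plane sections just recalled (Veronese vs.\ ruled) with the fact that $\av{X}$ carries a one-parameter family of characteristic circles, each of which intersects the absolute conic $\Omega$ in exactly two points and meets $\av{X}$ with multiplicity $\cyc(\av{X})\ge 1$ along $\Omega$ by Proposition~\ref{prp non-zero cyclicity}. First I would pull the family of characteristic circles back to the normalization $\overline{\av{X}}$, yielding a one-parameter family $\{\av{D}_t\}$ of rational curves inside a linear system $|\av{D}|$. The key invariants are then the self-intersection $\av{D}^2\ge 0$ (since the family moves), the intersection number $\av{D}\cdot\widetilde{\Omega}=2$ (where $\widetilde{\Omega}$ is the pull-back of $\Omega$), and the genus condition $g(\av{D})=0$ via adjunction; together these pin down the admissible divisor classes.

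In the Veronese case $\overline{\av{X}}=\av{V}$, the Picard group is generated by $L$ with $L^2=1$, so $\av{D}=dL$ and adjunction $g=(d-1)(d-2)/2=0$ forces $d\in\{1,2\}$. A pencil of class $L$ consists of lines on the underlying $\pr{2}$ sharing a common base point, so the characteristic circles on $\av{X}$ all pass through a single common point, which forces $\av{X}$ to be a right circular cone and hence a surface of revolution. A pencil of class $2L$ is a pencil of plane conics with four base points; for it to produce a one-parameter family of genuine circles, these base points must collapse onto $\Omega$, which in turn means that the circles lie in a pencil of parallel planes, so their centres (the spine) trace a straight line perpendicular to those planes -- again a surface of revolution.

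In the ruled case $\overline{\av{X}}=\av{R}_{n,m}$, writing $\av{D}=aP+bB$, the three constraints $\av{D}^2=2ab-nb^2\ge 0$, $g(\av{D})=0$ and $\av{D}\cdot\widetilde{\Omega}=2$ leave only a very short list of admissible $(a,b)$. I expect the analysis to eliminate all but $\av{D}\sim P$, i.e.\ the characteristic circles must be (images of) rulings. Rulings lie in the planes of a pencil sharing a common line, and this common line must contain all the circle centres, which is precisely the spine of $\av{X}$; consequently the spine is a line and $\av{X}$ is a surface of revolution.

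The main obstacle will be the bookkeeping in the ruled case: one needs to determine the class of $\widetilde{\Omega}$ in $\pic(\av{R}_{n,m})$ in terms of $(n,m)$ and the cyclicity $\cyc(\av{X})$, and then carefully dispatch every $(a,b)$ left open by the numerical constraints, ruling out the configurations in which the ``circles'' are not actually circles in $\pr{3}$ or in which $\av{X}$ is reducible/degenerate. This is precisely the combinatorial analysis carried out in~\cite{Lu13} for surfaces carrying two or more families of circles; our weaker single-family hypothesis shortens the case list considerably, so the argument essentially reduces to quoting and mildly adapting the numerical part of that classification.
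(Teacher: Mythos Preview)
Your overall framework---pass to the linear normalization $\overline{\av{X}}$, which is either $\av{V}$ or some $\av{R}_{n,m}$, and compute with divisor classes---is exactly the paper's framework. But the execution diverges from the paper in ways that leave genuine gaps.

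\textbf{The main numerical constraint is different, and yours is harder to use.} The paper's key input is $H\cdot C=2$: a generic plane meets a circle in two points. Since the class of $H$ is known explicitly ($H\sim 2L$ on $\av{V}$, $H\sim B+(m+n)P$ on $\av{R}_{n,m}$), this immediately cuts the list of admissible $(n,m)$ down to the handful of cases enumerated in \cite{Sch01}. You instead propose $\av{D}\cdot\widetilde{\Omega}=2$, which requires knowing the class of the pullback of $\Omega$; you yourself flag this as ``the main obstacle'', and it never gets resolved.

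\textbf{The Veronese argument assumes the circles form a pencil.} You write ``a pencil of class $L$ consists of lines \dots sharing a common base point''. But the family of characteristic circles is merely a one-parameter algebraic family in $|L|$ (or $|2L|$), i.e.\ a curve in that projective space, not a priori a linear pencil. Without linearity there is no common base point, so neither the ``cone'' conclusion for $d=1$ nor the ``four base points collapse onto $\Omega$'' conclusion for $d=2$ follows. (A cleaner observation, closer in spirit to what the paper cites from \cite{Lu13}, is that distinct characteristic circles of a non-rotational canal surface are generically disjoint even at infinity, so $\av{D}^2=0$; on $\av{V}$ this forces $d=0$, which already kills the Veronese case outright.)

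\textbf{The ruled-case expectation $\av{D}\sim P$ is wrong.} If $C\sim P$ then $H\cdot C=(B+(m+n)P)\cdot P=1$, so the image in $\pr{3}$ is a line, not a conic. Characteristic circles cannot be images of rulings. The paper does not try to show the circles are rulings; instead it takes the short list of $(n,m)$ coming from $H\cdot C=2$ and, case by case, shows either that the degree forces $\av{X}$ to be a quadric (hence a sphere, hence a surface of revolution) or that the cyclicity must vanish (for $\av{R}_{0,2}$ via an intersection-number contradiction $\Omega\cdot C=B^2\neq 2$, and for $\av{V}$ and $\av{R}_{1,1}$ by citing \cite{Lu13}). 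Proposition~\ref{prp non-zero cyclicity} then finishes the argument by contraposition: non-rotational would force $\cyc(\av{X})>0$.

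So the paper's endgame is ``derive $\cyc(\av{X})=0$ or $\deg\av{X}=2$ and invoke Proposition~\ref{prp non-zero cyclicity}'', not ``show geometrically that the spine is a line''. Your route could perhaps be salvaged by adding the constraint $\av{D}^2=0$ and dropping the pencil assumption, but as written the Veronese case is unproved and the ruled case rests on a false expectation.
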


\begin{proof}

Let $H\in\pic(\overline{\av{X}})$ denote the divisor class of the pullback of hyperplane section $\av{H}$, $C$ the pullback of a generic circle and $\Omega$ the pullback of the absolute conic section.  Then $H\cdot C=2$ since the generic plane intersects the generic circle in two points. On $\av{V}$ we have $H\sim 2L$ and on $\av{R}_{n,m}$ is $H\sim B+(m+n)P$. With this it is possible to enumerate all admissible combinations  of $m$ and $n$. We pick up from \cite{Sch01}
only these cases where $\av{X}$ contains at least 1-parameter family of conics.

\begin{enumerate}
   \item $\overline{\av{X}}=\av{V}$\\
      It follows from \cite{Lu13} that there does not exist such a surface with the non-zero cyclicity.
  \item $\overline{\av{X}}=\av{R}_{0,2}$\\
    This is the only case which is not contained in \cite{Lu13} because it contains only one-parameter family of conics. We have $\deg(\av{X})=\deg(\av{R}_{0,2}) = 4$ and thus $0<\cyc(\av{X})\leq 2$. Nevertheless a quartic surface with the cyclicity two is a Darboux cyclide which possesses elliptic plane sections. Hence, assume $\cyc(\av{X})=1$ and therefore $\Omega\sim C$.  Moreover  $C\sim B$ and we arrive at
   \begin{equation}
    2=\Omega\cdot C = B^2=-1,
  \end{equation}
  which is a~contradiction.
  \item $\overline{\av{X}}=\av{R}_{0,1}$, with $C\sim 2B$ or $\overline{\av{X}}=\av{R}_{2,0}$\\
    We have $\deg\av{R}_{0,1}=\deg\av{R}_{2,0}=2$. However, the only quadric surface with the non-zero cyclicity is a sphere, which is again a surface of revolution.
  \item $\overline{\av{X}}=\av{R}_{1,1}$\\
    This is analogous to the case 1.
\end{enumerate}

To conclude, the only possibility for a surface with rational sections and families of circles is that it possesses the zero cyclicity; by Proposition~\ref{prp non-zero cyclicity} it must be a surface of revolution.
\end{proof}

\subsection*{Case II: $\genus(\av{H})=1$}
Firstly, we stress out that the class of these canal surfaces is not empty any more.
For example, it is well known that a generic plane section $\av{H}$ of a~Dupin cyclide is an~elliptic curve. Recall that in this case there exists a birational morphism $\sigma:\av{E\rightarrow H}$, where $\av{E}\subset\pr{2}$ is a non-singular cubic curve in the~Weierstrass form.

\begin{thm}\label{thm construction of coherent map}
  Assume that $\vek[p],\vek[q]\in\av{H}$ are two corresponding  points and denote by $\vek[r]$ the third intersection point of the cubic $\av{E}$ with the line connecting $\sigma^{-1}(\vek[p])$ and $\sigma^{-1}(\vek[q])$.  The projection from $\vek[r]$ defines a morphism $\pi:\av{E}\rightarrow\pr{1}$ such that the composed mapping $\pi\circ\sigma^{-1}:\av{H}\dashrightarrow\pr{1}$ is the desired coherent mapping.
\end{thm}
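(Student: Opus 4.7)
The plan is to exploit the fact that a degree-two morphism from an elliptic curve to $\pr{1}$ is completely determined, up to an automorphism of $\pr{1}$, by a single one of its fibers. I will therefore produce two such maps, namely $\pi\circ\sigma^{-1}$ and the coherent map supplied by Lemma~\ref{lem hyperelliptic projection}, exhibit one fiber they share, and conclude that they have the same fibers.

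First I would use Lemma~\ref{lem hyperelliptic projection} together with a rational parameterization $\av{S}\dashrightarrow\pr{1}$ (available because $\av{X}$ is a rational canal surface, so $\av{S}$ is rational) to obtain a coherent map $\tau:\av{H}\dashrightarrow\pr{1}$ of degree two whose fibers are precisely the pairs of corresponding points. Pulling back through the birational morphism $\sigma:\av{E}\to\av{H}$ gives a degree-two morphism $\widetilde\tau=\tau\circ\sigma:\av{E}\to\pr{1}$. Since $\av{E}$ is a smooth plane cubic, $\widetilde\tau$ is given by a linear system $|D|$ with $\deg D=2$, and two points $P',Q'\in\av{E}$ lie in the same fiber of $\widetilde\tau$ if and only if $P'+Q'\sim D$. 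Because $(P,Q)=(\sigma^{-1}(\vek[p]),\sigma^{-1}(\vek[q]))$ is one such pair, $D\sim P+Q$, so the fibers of $\widetilde\tau$ are characterized by the linear equivalence $P'+Q'\sim P+Q$.

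Next I would analyse the projection $\pi:\av{E}\to\pr{1}$ from $\vek[r]$. A generic line $\ell$ through $\vek[r]$ meets $\av{E}$ in $\vek[r]$ plus two further points $x,y$, and since any line cuts the plane cubic in a divisor linearly equivalent to $3O$ (where $O$ is the flex chosen as origin of the group law), one has $x+y+\vek[r]\sim 3O$. By the very construction of $\vek[r]$ the same relation holds for $P,Q$, i.e.\ $P+Q+\vek[r]\sim 3O$. Subtracting these two relations yields $x+y\sim P+Q$, which is exactly the condition defining the fibers of $\widetilde\tau$. Thus $\pi$ and $\widetilde\tau$ are two degree-two maps $\av{E}\to\pr{1}$ with the same fibers, hence coincide up to an automorphism of $\pr{1}$; composing with $\sigma^{-1}$ shows that $\pi\circ\sigma^{-1}$ agrees with $\tau$ on the level of fibers, so it is coherent.

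The step I expect to require the most care is the identification of the fibers of $\widetilde\tau$ with pairs satisfying $P'+Q'\sim P+Q$: one has to make sure that $\widetilde\tau$ actually extends to a morphism of degree exactly two (no unexpected base points or drop in degree when pulled back through $\sigma$) and that the generic plane section $\av{H}$ is chosen so as to avoid the finitely many degenerations in which $\vek[p]=\vek[q]$, the line $\overline{PQ}$ is tangent to $\av{E}$, or $\vek[r]$ is a flex. All of these are avoided by the genericity of $\av{H}$, and outside of them the group-law computation above goes through unchanged.
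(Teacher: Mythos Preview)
Your argument is correct and is essentially the same as the paper's: both proofs rest on the fact that on a genus-one curve a degree-two map to $\pr{1}$ is determined, up to $\mathrm{Aut}(\pr{1})$, by the linear equivalence class of a fiber. The paper states this directly via Riemann--Roch ($\dim L(P+Q)=2$, so any two degree-two maps gluing $P$ and $Q$ come from the same complete linear system), while you verify the same linear equivalence $x+y\sim P+Q$ explicitly through the chord construction of the group law on the cubic; the content is identical.
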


\begin{proof}
  We know that a~coherent mapping really can be found -- consider  e.g. $\zeta:\av{H\dashrightarrow S}\dashrightarrow\pr{1}$, where the second birational mapping exists because $\av{S}$ is rational. Let $\mu:\,\av{\hat{H}\rightarrow H}$ be the~desingularization of $\av{H}$ and  $P$, $Q$ be the divisor classes on $\hat{\av{H}}$ corresponding to the preimages of $\vek[p]$ and $\vek[q]$, respectively.  Since $\hat{\av{H}}$ is a non-singular projective curve the rational mapping $\mu\circ\zeta$ extends to the two to one morphism $\varphi:\hat{\av{H}}\rightarrow\pr{1}$. This mapping corresponds to some two-dimensional subspace of the Riemann space $L(P+Q)$ because it maps $P$ and $Q$ into a single point on $\pr{1}$.  Now, let there be another morphism  $\psi:\hat{\av{H}}\rightarrow\pr{1}$ of degree 2 which maps $P$ and $Q$ to the same point. Then $\psi$ is also determined be some linear subspace of $L(P+Q)$ of dimension two. However, by Riemann-Roch theorem, we have $\dim L(P+Q)=2$ and thus $\varphi$ and $\psi$ differs only by an automorphism of $\pr{1}$.

Since the mapping $\pi\circ\sigma^{-1}:\av{H}\dashrightarrow\pr{1}$ is a rational mapping of degree two which glues one pair of corresponding points together it must be a coherent mapping.

\end{proof}

Let $\vek[p]\in\av{X}$ be a regular point and let $f(x,y,z)=F(1,x,y,z)$ be the dehomogenization of the defining polynomial of $\av{X}$. Choose a plane given by the linear equation $h(x,y,z)=0$ passing through $\vek[p]$. If $\vek[q]\in\av{H}$ is a point corresponding to $\vek[p]$  then we know that a normal lines $N_{\vek[p]}\av{X}$ and $N_{\vek[q]}\av{X}$ intersect in a point $\vek[r]$ such that $|\vek[p\,r]|=|\vek[q\,r]|$. Hence we may write down the following system of equations for $\vek[q]$
\begin{equation}\label{eq corresponding point}
  \begin{array}{rcl}
    f(\vek[x]) &=&0\\
    h(\vek[x]) &=&0\\
    \det\left[\vek[p-x],\nabla\,f(\vek[p]),\nabla\,f(\vek[x])\right] &=& 0\\
    ((\vek[p-x])\cdot\nabla\,f(\vek[x]))^2|\nabla\,f(\vek[p])|^2-    ((\vek[p-x])\cdot\nabla\,f(\vek[p]))^2|\nabla\,f(\vek[x])|^2 &=& 0.
  \end{array}
\end{equation}
This system  has trivial solutions $\vek[p]$ and any surface singularity contained in the plane section $\av{H}$. If $\vek[q]$ is a non-trivial solution then satisfying the last two equations guarantees that the normal lines at $\vek[p]$ and $\vek[q]$ and the line joining these points are contained in a common plane and simultaneously the triangle which they form is isosceles.

For the surface given by a generic polynomial $f$ and a generic linear polynomial $h$, system \eqref{eq corresponding point} is not expected to have any non-trivial solution. On contrary if $f$ is the defining polynomial of a canal surface then the only expected non-trivial solution is exactly the corresponding point $\vek[q]$.

\begin{lem}
  Let $\av{X}$ be a rational canal surface with a regular point $\vek[p]\in\av{X}$. Then for a generic plane $h(\vek[x])=0$ passing through $\vek[p]$, the only non-trivial solution of~\eqref{eq corresponding point} is the~point corresponding to $\vek[p]$.
\end{lem}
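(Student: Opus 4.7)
The plan is to first exhibit the expected corresponding point $\vek[q]$ as a solution, and then to rule out any further non-trivial solution by a dimension count on the surface together with a genericity argument on the cutting plane.

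I would let $\vek[s]$ be the spine point such that $\vek[p]$ lies on the sphere $\Sigma(\vek[s])$, and write $\av{C}_{\vek[s]}$ for the characteristic circle through $\vek[p]$. For a generic plane $h=0$ containing $\vek[p]$, this plane is neither tangent to $\av{C}_{\vek[s]}$ nor contains it, so it meets the circle in $\vek[p]$ and in exactly one further point $\vek[q]$. The surface normals at $\vek[p]$ and $\vek[q]$ are radii of $\Sigma(\vek[s])$, hence they are coplanar, meet at $\vek[s]$, and are equidistant from $\vek[p]$ and $\vek[q]$. These two geometric facts translate precisely into the last two equations of the system, so $\vek[q]$ is a solution that is neither $\vek[p]$ nor a singular point of $\av{X}$.

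For the converse, introduce the \emph{symmetric locus} $Z(\vek[p])=\{\vek[x]\in\av{X}:E_{3}(\vek[x])=E_{4}(\vek[x])=0\}$, where $E_{3},E_{4}$ denote the last two equations of the system. The previous paragraph shows $\av{C}_{\vek[s]}\subset Z(\vek[p])$. The key claim is that $\av{C}_{\vek[s]}$ is the only one-dimensional component of $Z(\vek[p])$ passing through $\vek[p]$. I would parametrise $\av{X}$ locally near $\vek[p]$ by $\vek[x](t,u)$, with $t$ labelling the characteristic circle and $u$ the angular position on it so that $\vek[p]=\vek[x](t_{0},u_{0})$. The radial-normal property of canal surfaces makes $E_{3}$ divisible, after this substitution, by the factor $(t-t_{0})$, leaving a residual factor whose zero locus in $\av{X}$ is a curve of \emph{second} incidences of normal lines. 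If this residual locus also passed through $\vek[p]$, one would obtain a second local family of characteristic circles through $\vek[p]$, incompatible with the assumption that $\av{X}$ is not a surface of revolution in light of the circle-family classification invoked in the proof of Theorem~\ref{thm rational section then sor}. Intersecting further with $E_{4}=0$ then yields a decomposition $Z(\vek[p])=\av{C}_{\vek[s]}\cup F$ with $F$ finite and not containing $\vek[p]$.

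Finally, I would close with a standard genericity argument on the two-dimensional pencil of planes through $\vek[p]$: the finitely many exceptional points (namely those of $F$, the singularities of $\av{X}$ other than $\vek[p]$, and the critical positions of $h$ that make $\{h=0\}$ tangent to $\av{C}_{\vek[s]}$ at $\vek[p]$) each impose a codimension-one condition on the pencil, so a generic $h$ avoids them all. The intersection $\{h=0\}\cap Z(\vek[p])$ then reduces to $\{\vek[p],\vek[q]\}\cup(\mathrm{Sing}(\av{X})\cap\av{H})$, which is precisely the claim. The principal obstacle is the middle step---ruling out additional one-dimensional components of $Z(\vek[p])$ through $\vek[p]$---and this is exactly where the canal-surface hypothesis combined with the exclusion of surfaces of revolution is used in an essential way.
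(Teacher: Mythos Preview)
Your opening paragraph (exhibiting $\vek[q]$ as a solution) and the closing genericity step are both fine; the gap is in the middle, exactly where you flag it, but the argument you propose there does not work.

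The claim that a second branch of $\{E_{3}=0\}$ through $\vek[p]$ would yield ``a second local family of characteristic circles through $\vek[p]$'' is a non-sequitur. The locus $E_{3}=0$ on $\av{X}$ is simply the set of points $\vek[x]$ whose normal line meets (or is parallel to) the fixed line $N_{\vek[p]}\av{X}$. A branch of this locus through $\vek[p]$ is just a curve on $\av{X}$ along which that incidence happens to persist; there is no reason it should be a circle, nor that its points should share a common normal foot, nor that it should arise from any envelope-of-spheres structure on $\av{X}$. The circle-family classification behind Theorem~\ref{thm rational section then sor} is therefore not applicable here. There is also a second, independent gap: even if you did control the components of $Z(\vek[p])$ \emph{through $\vek[p]$}, you still need that $Z(\vek[p])\setminus\av{C}_{\vek[s]}$ is globally finite. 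A one-dimensional component of $Z(\vek[p])$ lying away from $\vek[p]$ would still be hit by a generic plane through $\vek[p]$, producing spurious solutions; your local parametrisation near $\vek[p]$ says nothing about such components.

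The paper avoids both issues by working along the spine rather than locally at $\vek[p]$. Since every normal on the characteristic circle $\av{C}_{t_{0}}$ passes through $\vek[s](t_{0})$, the coplanarity condition $E_{3}=0$ restricted to $\av{C}_{t_{0}}$ becomes membership in an explicit plane $\av{P}(t_{0})$ (the plane through $\vek[s](t_{0})$ containing $N_{\vek[p]}\av{X}$). Hence $\{E_{3}=0\}\cap\av{C}_{t_{0}}$ consists of at most two points $\vek[q]_{i}(t_{0})$, expressible via square roots of rational functions of $t_{0}$. Substituting these into the isosceles condition $E_{4}=0$ yields an algebraic relation in the single variable $t_{0}$ with finitely many roots, which directly exhibits $Z(\vek[p])\setminus\av{C}_{\vek[s]}$ as a finite set, without any local analysis at $\vek[p]$ and without invoking any classification of circle families.
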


\begin{proof}
  Let $(\vek[s](t),R(t))$ be a~parameterization of $\av{M}^2$, then $\av{X}$ is generated by the characteristic circles $\Sigma(t)=\Sigma'(t)=0$, c.f. \eqref{eq canal_eq}. We will derive the conditions under which  $\av{C}_0$, the characteristic circle $\Sigma(t_0)=\Sigma'(t_0)=0$, contains the point $\vek[q]$ corresponding to $\vek[p]$. Assume without loss of generality $\vek[s](t_0)\not\in N_{\vek[p]}\av{X}$. Then, since $N_{\vek[q]}\av{X}$ passes through $\vek[s](t_0)$ the intersection condition with $N_{\vek[p]}\av{X}$ can be reformulated as $\vek[q]$ has to be contained in the plane $\av{P}: (\nabla\,f(\vek[p])\times (\vek[p]-\vek[s](t_0))\cdot(\vek[x]-\vek[s](t_0))=0$.

If $\av{C}_0\subset\av{P}$ then one can easily check that there exists a point on the circle corresponding to $\vek[p]$. Nevertheless this condition is equivalent to $\Sigma'(t_0) = \av{P}$ which can be rewritten as $R'(t_0)=0$, $\vek[s]'(t_0)\cdot\nabla\,f(\vek[p])=0$ and $\vek[s]'(t_0)\cdot(\vek[p]-\vek[s]'(t_0))=0$ that can be fulfilled for finitely many values of $t$, only.  Hence, assume $\av{C}_0\not\subset\av{P}$ and denote by $\vek[q]_i(t_0)$ $i=1,2$ the two points from the intersection $\av{C}_0\cap\av{P}$. Computing $\vek[q]_i(t_0)$ consists in solving quadratic equations and thus coordinates of these points can be expressed using rational functions in $t$ and their square roots.
Thus the coordinates of $\vek[r]_i(t_0) = N_{\vek[p]}\av{X}\cap N_{\vek[q]_i}\av{X}$ can be written in the same way.  We conclude that $\vek[q]_i(t_0)$ is the point corresponding to $\vek[p]$ if and only if $|\vek[p]-\vek[r]_i(t_0)|^2=|\vek[q]_i(t_0)-\vek[r]_i(t_0)|^2$. Such a~system of equations can have only finitely many solutions. We conclude that except the points  lying on the same characteristic circles as $\vek[p]$ there exists only finitely many points corresponding to~$\vek[p]$ and thus a~generic plane misses them all.
\end{proof}

\subsection*{Case III: $\genus(\av{H})\geq 2$}

Whereas each curve of genus two is hyperelliptic, this does not hold for curves of higher genus anymore.  In addition, it is well known that a generic curve of genus at least three is not hyperelliptic.  Hence one first needs to test the hyperellipticity of $\av{H}$ and if the answer is positive then we can find its Weierstrass form. The following proposition, whose proof can be found in \cite[p. 117]{Sch94}, guarantees that the situation is much simpler compared to the elliptic case.

\begin{prp}\label{prp uniquenes of hyperellipticity}
  The two to one map from a hyperelliptic curve to $\pr{1}$ is unique up to an automorphism of $\pr{1}$
\end{prp}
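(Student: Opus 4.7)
The plan is to recover the $2$-to-$1$ morphism from the intrinsically defined canonical map of the curve. Let $C$ be a hyperelliptic curve of genus $g\ge 2$ and let $\pi\colon C\to\pr{1}$ be an arbitrary $2$-to-$1$ morphism; write $D=\pi^{-1}(\mathrm{pt})$ for the associated effective divisor of degree two. Since $\pi$ has degree two one has $\dim|D|\ge 1$, and Clifford's inequality applied to the special divisor $D$ yields the reverse estimate, so $|D|$ is a $g^1_2$. The proposition is therefore equivalent to showing that the linear equivalence class of $D$ does not depend on the chosen $\pi$, i.e.\ that any two such divisors $D_1,D_2$ on $C$ satisfy $D_1\sim D_2$ in $\pic(C)$.

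The heart of the argument is the identity $K\sim (g-1)D$, where $K$ denotes the canonical class. I would derive it by iterating the equality case of Clifford's theorem on a hyperelliptic curve: induction on $n$ shows that $h^0(nD)=n+1$ for $0\le n\le g-1$, so $(g-1)D$ has dimension $g=h^0(K)$ and degree $2g-2=\deg K$; Riemann--Roch then forces the claimed linear equivalence. Consequently the canonical map factors as
\begin{equation}
\phi_K\colon C\xrightarrow{\;\pi\;}\pr{1}\xrightarrow{\;\nu_{g-1}\;}\pr{g-1},
\end{equation}
where $\nu_{g-1}$ is the $(g-1)$-st Veronese embedding whose image is a rational normal curve of degree $g-1$.

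Since $\phi_K$ is canonically attached to $C$ up to the $\mathrm{PGL}_g$-action on the target, so is its image $\nu_{g-1}(\pr{1})$; the map $\pi$ is then recovered as the factorization of $\phi_K$ through this rational normal curve, and any two identifications of the rational normal curve with $\pr{1}$ differ by an element of $\mathrm{Aut}(\pr{1})$, which is the claimed uniqueness. The special case $g=2$ collapses pleasantly: the canonical map itself is the $2$-to-$1$ map, so uniqueness is immediate from uniqueness of $K$ in $\pic(C)$.

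The main obstacle is the identity $K\sim (g-1)D$: here all the specific geometry of hyperelliptic curves enters, and proving it requires the equality case of Clifford's theorem together with a careful Riemann--Roch computation. Once this is established the remainder of the argument is formal: factorization through a rational normal curve and the trivial observation that two parametrizations of $\nu_{g-1}(\pr{1})$ by $\pr{1}$ always differ by a projective automorphism.
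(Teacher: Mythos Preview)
The paper does not actually prove this proposition; it simply cites an external reference (\cite{Sch94}, p.~117). Your argument, by contrast, supplies a complete proof and follows the standard route via the canonical map: establishing $K\sim(g-1)D$ for any $g^1_2$ divisor $D$, and then recovering the $2$-to-$1$ morphism as the factorization of the intrinsic canonical map $\phi_K$ through its image rational normal curve. The induction $h^0(nD)=n+1$ and the Riemann--Roch step forcing $K\sim(g-1)D$ (via $\deg(K-(g-1)D)=0$ together with $h^0(K-(g-1)D)=1$) are both correct.

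One minor point worth tightening: you invoke Clifford's inequality on ``the special divisor $D$'' without checking that $D$ is special. This is immediate---Riemann--Roch gives $h^1(D)=h^0(D)-3+g\ge g-1\ge 1$ once $g\ge 2$ and $h^0(D)\ge 2$---but it should be stated, since Clifford requires it. With that addition your proof is complete and self-contained, which is more than the paper offers.
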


\begin{cor}
  Let $\av{E}$ be Weierstrass form of $\av{H}$ as above and let $\pi$ be the projection mapping $(x,y)\in\av{E}$ to its $x$-coordinate. Then $\pi\circ\sigma^{-1}:\av{H}\rightarrow\pr{1}$ is a~coherent mapping.
\end{cor}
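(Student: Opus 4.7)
The strategy is to combine two facts: a coherent mapping is always a degree-two map from $\av{H}$ to $\pr{1}$, and on a hyperelliptic curve there is essentially only one such map.

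First I would verify that a coherent map from $\av{H}$ to $\pr{1}$ exists at all. By Lemma~\ref{lem hyperelliptic projection} the rational mapping $\xi:\av{H}\dashrightarrow\av{S}$ has fibre $\{\vek[p],\vek[q]\}$ over a generic $\vek[s]\in\av{S}$ consisting of the two points where the plane containing $\av{H}$ meets $\av{C}_{\vek[s]}$. Since $\av{X}$ is a rational canal surface its spine curve $\av{S}$ is rational, so there is a birational map $\tau:\av{S}\dashrightarrow\pr{1}$. The composition $\tau\circ\xi:\av{H}\dashrightarrow\pr{1}$ is then a rational map of degree two which by construction glues together every pair of corresponding points on $\av{H}$. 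In particular, it is a coherent mapping in the sense defined in the paper.

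Next I would show that $\pi\circ\sigma^{-1}:\av{H}\dashrightarrow\pr{1}$ is also a degree-two map. By construction $\sigma:\av{E}\rightarrow\av{H}$ is birational, and the projection $\pi$ from the affine Weierstrass model $y^2=P(x)$ to its $x$-coordinate is precisely the hyperelliptic $2{:}1$ covering of $\pr{1}$. Hence $\pi\circ\sigma^{-1}$ is a dominant rational map $\av{H}\dashrightarrow\pr{1}$ of degree two.

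Now the key step is to invoke Proposition~\ref{prp uniquenes of hyperellipticity}: since $\genus(\av{H})\geq 2$ and $\av{H}$ is hyperelliptic, any two maps $\av{H}\dashrightarrow\pr{1}$ of degree two coincide up to an automorphism of $\pr{1}$. Applying this to the coherent mapping $\tau\circ\xi$ and to $\pi\circ\sigma^{-1}$ we obtain an automorphism $\alpha\in\mathrm{Aut}(\pr{1})$ with $\pi\circ\sigma^{-1}=\alpha\circ\tau\circ\xi$.

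Finally I would observe that coherence is preserved under post-composition with an automorphism of $\pr{1}$: if $\vek[p],\vek[q]\in\av{H}$ are corresponding points then $(\tau\circ\xi)(\vek[p])=(\tau\circ\xi)(\vek[q])$, and applying $\alpha$ to both sides gives $(\pi\circ\sigma^{-1})(\vek[p])=(\pi\circ\sigma^{-1})(\vek[q])$. Hence $\pi\circ\sigma^{-1}$ itself glues corresponding points and is therefore coherent. I expect no real obstacle here; the only subtle point is making sure one really does have the existence of \emph{some} coherent map to apply the uniqueness proposition to, which is exactly what Lemma~\ref{lem hyperelliptic projection} together with the rationality of $\av{S}$ supplies.
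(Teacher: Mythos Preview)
Your argument is correct and follows exactly the same route as the paper's proof: existence of a coherent map from Lemma~\ref{lem hyperelliptic projection}, then uniqueness of the degree-two map on a hyperelliptic curve via Proposition~\ref{prp uniquenes of hyperellipticity}, hence $\pi\circ\sigma^{-1}$ differs from it only by an element of $\mathrm{Aut}(\pr{1})$ and is therefore coherent. The paper's version is simply a terser statement of the same three steps.
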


\begin{proof}
  A coherent mapping exists by Lemma~\ref{lem hyperelliptic projection}. Since $\pi\circ\sigma^{-1}$ is another mapping from $\av{H}$ to $\pr{1}$ of degree two, it differs from the~mentioned coherent mapping only by $\mathrm{Aut}(\pr{1})$, cf. Proposition~\ref{prp uniquenes of hyperellipticity}. And thus it is coherent as well.
\end{proof}

\section{Reconstruction of the squared MAT from the coherent mapping}\label{sec reconstruction}

Let $\mu:\hat{\av{H}}\rightarrow\av{H}$ be the~desingularization of $\av{H}$ and let $\vek[s](t)$ be a proper parameterization of the spine curve $\av{S}$. Then we have two mappings $\hat{\av{H}}\rightarrow\pr{1}$  of degree two, namely $\vek[s]^{-1}\circ\xi\circ\mu$ and $\pi\circ\sigma^{-1}\circ\mu$, where $\pi\circ\sigma^{-1}$ is the coherent mapping constructed in the previous section and $\xi$ is the~mapping from Lemma~\ref{lem hyperelliptic projection}.  Since these two mappings glue the pairs of corresponding points they have to differ only by an automorphism $\phi:\pr{1}\rightarrow\pr{1}$  (c.f. the proof of Theorem~\ref{thm construction of coherent map} and Proposition~\ref{prp uniquenes of hyperellipticity} for elliptic and hyperelliptic curves, respectively). The structure of above mappings is summarized in the diagram
\begin{equation}\label{eq rational param of MAT}
 \xymatrix{
      \hat{\av{H}}\ar[rd]^{\mu}\ar@/_/[rdd]\ar@/^2pc/[rrrd]    &&&\\
      &\av{H}\ar@{-->}[d]^{\pi\circ\sigma^{-1}}\ar@{-->}[r]^{\xi} &\av{S}\ar@{-->}[r]^{\vek[s]^{-1}} &\pr{1}\\
      &\pr{1}\ar[urr]_{\exists\phi\in\mathrm{Aut}(\pr{1})} &&
      }
\end{equation}

\bigskip
The mapping $\pi$ enabled us to parameterize the curve $\av{E}$ in the Weierstrass form by a square-root parameterization $\f e^{\pm}(t)$. Then via $\sigma$ we arrive at the square-root parameterization $\f h^{\pm}(t)$ of $\av{H}$ that correspond in parameter (which means that for any $t_0$ the points $\f h^+(t_0)$ and $\f h^-(t_0)$ are lying on the same characteristic circle of $\av{X}$). Intersecting the normals constructed at these corresponding points yields a parameterization of the rational spine curve $\av{S}$. This parameterization, cf. \eqref{eq rational param of MAT}, can be expressed as $\vek[s] \circ\phi$, which is a proper rational parameterization of $\av{S}$. The computation of the squared radius easily follows and we obtain a parameterization of $\av{M}^2$. Finally, using \cite{BaJuLaSchSi14} we compute from $\av{M}^2$ a parameterization $\f x(t,s)$ of the associated canal surface and if $f(\f x(t,s))=0$ we can conclude that $\av{X}$ is a canal surface with the squared medial axis transform $\av{M}^2$.

\medskip
The whole procedure is summarized in Diagram~\ref{diagram}. The methods and approaches studied in this paper will be now presented in detail in the following two particular examples.

\begin{exmp}\rm
Let $\av{X}_{\cis[R]}$ be an implicit surface in $\euR{3}$ given by the defining polynomial
\begin{equation}
   f(x,y,z) = x^3+xy^2+xz^2+2x^2+3y^2+z^2-5x-6.
\end{equation}
It can be shown that $\av{X}_{\cis[R]}$ is not a surface of revolution and its projective closure contains the absolute conic section with the multiplicity one, i.e., it holds $\mathrm{c}(\av{X})>0$.

A generic plane section is a non-singular cubic curve which possesses the genus one, i.e., it is an elliptic curve. For further computations, we find a regular point $\f p =\left(\frac{3}{2},0,\frac{3}{2}\right)$ on the canal surface and choose a plane which contains it, e.g. $h(x,y,z)=x-z+3=0$.  Next, we have to compute the point $\f q$  corresponding to $\f p$. The system \eqref{eq corresponding point} of the polynomial equations
\begin{equation}
\begin{array}{rcl}
x^3+xy^2+xz^2+2x^2+3y^2+z^2-5x-6 & = & 0 \\[0.8ex]
x-z+3 & = & 0 \\[0.8ex]
15x^2y+15y^3+15yz^2-45xy+80yz-120y & = & 0 \\[0.8ex]
-81x^6+216x^5z-290x^4y^2-209x^4z^2+\\
240x^3y^2z+240x^3z^3-209x^2y^4-328x^2y^2z^2-119x^2z^4+ & &\\
24xy^4z+ 48xy^2z^3+24xz^5+9y^4z^2+18y^2z^4+9z^6-297x^5+ & &\\
459x^4z-928x^3y^2-368x^3z^2+588x^2y^2z+ & &\\
204x^2z^3-759xy^4-630xy^2z^2-15xz^4-63y^4z-126y^2z^3-63z^5+ & &\\
396x^4-1473x^3z+593x^2y^2+974x^2z^2-267xy^2z-627xz^3-396y^4+ & &\\
342y^2z^2+50z^4+1782x^3-1803x^2z+2694xy^2+608xz^2-963y^2z+ & &\\
453z^3-621x^2+2595xz+1179y^2-849z^2-2385x-450z+1350
 &=& 0
\end{array}
\end{equation}
gives the only non-trivial solution $\f q =(1,0,2)$.

\medskip
The Weierstrass form $\av{E}$ of $\av{H}$ is
\begin{equation}
y_0^2+x_0^3-\frac{364}{3}x_0+\frac{13376}{27}=0
\end{equation}
and we arrive at the preimages of the corresponding points
\begin{equation}
\f p_0=\sigma^{-1}(\f p)=\left(\frac{22}{3},0\right), \quad \f q_0=\sigma^{-1}(\f q)=\left(\frac{16}{3},0\right).
\end{equation}
The third intersection point $\f r$ of $\av{E}$ with the line $\f p_0 \f q_0$ is $\left(-\frac{38}{3},0\right)$. The mapping $\sigma^{-1}$ composed with the projection from the point $\f r$ to the line $x=0$ yields the coherent mapping. The ``inverse'' of the projection is
given by the square-root parameterization of $\av{E}$ in the form
\begin{equation}
\f e^{\pm}(t)=
\begin{pmatrix}
{\frac {19}{3}}-{\frac {9}{2888}}\,{t}^{2}\pm{
\frac {1}{2888}}\,\sqrt {81\,{t}^{4}-987696\,{t}^{2}+8340544}\\[2ex]

\left( -{\frac {27}{109744}}\,{t}^{2}+\frac{3}{2}\pm{
\frac {3}{109744}}\,\sqrt {81\,{t}^{4}-987696\,{t}^{2}+8340544}
 \right) t
\end{pmatrix}\, .
\end{equation}
Hence, we arrive at the following square-root parameterization of $\av{H}$
\begin{equation}
\f h^{\pm}(t)=
\begin{pmatrix}
\displaystyle
-\frac{1}{2}\cdot \frac{ 126-{\frac {81}{1444}}\,{t}^{2
}\pm{\frac {9}{1444}}\,\sqrt {81\,{t}^{4}-987696\,{t}^{2}+8340544}
 }{ -51-{\frac {27}{2888}}\,{t}^{2}\pm{\frac {3}{2888}}\,
\sqrt {81\,{t}^{4}-987696\,{t}^{2}+8340544} }\\[4ex]

\displaystyle
-3t\cdot \frac{ -{\frac {27}{109744}}\,{t}^{2}+3/2\pm{
\frac {3}{109744}}\,\sqrt {81\,{t}^{4}-987696\,{t}^{2}+8340544}
 }{-51-{\frac {27}{2888}}\,{t}^{2}\pm{\frac {3}{2888}}\,
\sqrt {81\,{t}^{4}-987696\,{t}^{2}+8340544} }\\[4ex]

\displaystyle
3+\frac{1}{2}\cdot \frac{126-{\frac {81}{1444}}\,{t}^{2}\pm{
\frac {9}{1444}}\,\sqrt {81\,{t}^{4}-987696\,{t}^{2}+8340544} }{
 -51-{\frac {27}{2888}}\,{t}^{2}\pm{\frac {3}{2888}}\,\sqrt {81\,
{t}^{4}-987696\,{t}^{2}+8340544} }
\end{pmatrix}\, .
\end{equation}
From this we directly obtain a rational parameterization of $\av{M}^2$ in the form
\begin{equation}
\left(\frac{9}{5776}t^2-\frac{1}{2},-\frac{3}{38}t,0, \left(\frac{9}{5776}t^2+\frac{5}{2}\right)^2\right).
\end{equation}
In addition, in this case the standard medial axis transform $\av{M}$ is rational as well.
Finally, we compute the parameterization $\f x(s,t)$  of the associated canal surface associated to $\av{M}^2$. And as the condition $f(\f x(t,s))=0$ is satisfied we conclude that $\av{X}_{\cis[R]}$ is a rational canal surface with the squared medial axis transform~$\av{M}^2$.
\end{exmp}

\bigskip
\begin{exmp}\rm
Let $\av{X}_{\cis[R]}$ be an implicit surface in $\euR{3}$ given by the defining polynomial
\begin{equation}
\begin{array}{rcl}
f(x,y,z) & = & 256x^6+768x^4y^2+256x^4z^2+768x^2y^4+512x^2y^2z^2+256y^6+256y^4z^2-1536x^5+\\
& & +512x^4z-3072x^3y^2-1024x^3z^2-128x^2y^2z+512x^2z^3-1536xy^4-1024xy^2z^2-\\
& & -640y^4z-512y^2z^3+3712x^4 -2048x^3z+4928x^2y^2-1152x^2yz+1664x^2z^2+\\
& & +256xy^2z--1024xz^3+784y^4-1152y^3z+1024y^2z^2-1024yz^3+256z^4-\\
& & -4608x^3+576x^2y+2656x^2z-3712xy^2+2304xyz-1280xz^2-288y^3-448y^2z+\\
& & +384yz^2+128z^3+3232x^2-1152xy-1216xz+584y^2-1056yz+496z^2-\\
& & -1344x+344y+120z+257.
\end{array}
\end{equation}
Firstly, it can be shown that $\av{X}_{\cis[R]}$ is not a surface of revolution. In addition, its projective closure contains the absolute conic section with the multiplicity one and thus the surface has the non-zero cyclicity $\mathrm{c}(\av{X})$.

Since a generic plane section has the genus two it is a hyperelliptic curve. For further computations, we choose the plane $h(x,y,z)=x+2y-z=0$. The Weierstrass form of $\av{H}$ is
\begin{equation}
\begin{array}{rcl}
y_0^2+194693718016x_0^6-556701816832x_0^5+663351619840x_0^4- & &\\
-421618428672x_0^3+150753585024x_0^2-28751299584x_0+2284923704 & = &0.
\end{array}
\end{equation}
We compute $\f e^{\pm}(t)$, then $\f h^{\pm}(t)$ and from them we obtain the rational parameterization of $\av{M}^2$ in the form
\begin{equation}
\left(1, \frac {-7(2t-1)}{12t-5},{\frac {49(4{t}^{2}-4t+1)}{(12t-5)^{2}}},\frac {-7(2t-1)}{12t-5}\right).
\end{equation}
Finally, we compute the parameterization $\f x(s,t)$  of the associated canal surface associated to $\av{M}^2$. And as the condition $f(\f x(t,s))=0$ holds we conclude that $\av{X}_{\cis[R]}$ is a rational canal surface with the squared medial axis transform $\av{M}^2$.
\end{exmp}

\begin{table}[t]
\begin{tikzpicture}[node distance=2cm]
  \node (start) [start] {Input: $\av{X}:F(W,X,Y,Z)$};
  \node (dec1) [decision, below of=start, yshift=-0.3cm] {Is $\av{X}$ rotational?};
  \node (stop1) [stopyes, right of=dec1,xshift=5cm] {$\av{X}$ is a canal surfaces, see \cite{VrLa14} for more details};
  \node (dec2) [decision, below of=dec1, yshift=-0.3cm] {$\cyc(\av{X})>0$};
  \node (stop2) [stopno, right of=dec2, xshift=5cm] {$\av{X}$ is not a canal suraface by Proposition~\ref{prp non-zero cyclicity}};
  \node (dec3) [decision, below of=dec2, yshift=-0.3cm] {$\genus(\av{H})=0$};
  \node (stop3) [stopno, right of=dec3,xshift=5cm] {$\av{X}$ is not a canal surface by Theorem~\ref{thm rational section then sor}};

  \node (dec4) [decision, below of=dec3, yshift=-0.3cm] {$\genus(\av{H})=1$};

  \node (proc1) [process, text width=20em,right of=dec4, xshift=7cm] {\vek[p]:= regular point on $\av{X}$\\ \vek[q]:= corresponding point, see \eqref{eq corresponding point}};

  \node (dec6) [decision, below of=proc1,yshift=-0.25cm] {$\exists\vek[q]$?};
  \node (proc2) [process, below of=dec6,yshift=0.5cm] {$\sigma:\av{E\rightarrow H}$ Weierstrass form, see~\cite{vH95}};
  \node (proc3) [process, below of=proc2,yshift=0.5cm] {$\vek[r]:=$ intersection of  $\av{E}$ with line $\sigma^{-1}(\vek[p])\sigma^{-1}(\vek[q])$};
  \node (proc4) [process, below of=proc3,yshift=0.5cm] {$\pi:\av{E}\rightarrow\pr{1}$ projection from $\vek[r]$};

  \node (dec5) [decision, below of=dec4, yshift=-0.3cm] {Is $\av{H}$ hyperelliptic?};
  \node (stop4) [stopno, right of=dec5,xshift=2.6cm] {$\av{X}$ is not canal surfaces};
  \node (proc1a) [process, text width=20em,below of=dec5,yshift=-0.3cm] {$\sigma:\av{E\rightarrow H}$ Weierstrass form, see \cite{vH02}
                                           $\pi:\av{E}\rightarrow\pr{1}$ standard projection   };

  \node (stop5a) [stopyes, below of=proc1a,yshift=-1cm] {$\pi\circ\sigma^{-1}:\av{H}\dashrightarrow\pr{1}$};

  \draw [arrow] (start) -- (dec1);
  \draw [arrow] (dec1) -- node[anchor=south] {yes} (stop1);
  \draw [arrow] (dec1) -- node[anchor=west] {no} (dec2);
  \draw [arrow] (dec2) -- node[anchor=south] {no} (stop2);
  \draw [arrow] (dec2) -- node[anchor=west] {yes} (dec3);
  \draw [arrow] (dec3) -- node[anchor=south] {yes} (stop3);
  \draw [arrow] (dec3) -- node[anchor=west] {$\genus(\av{H})>0$} (dec4);
  \draw [arrow] (dec4) -- node[anchor=south] {yes} (proc1);
  \draw [arrow] (dec4) -- node[anchor=west] {$\genus(\av{H})\geq 2$} (dec5);
  \draw [arrow] (dec5) -- node[anchor=south] {no} (stop4);
  \draw [arrow] (dec5) -- node[anchor=west] {yes} (proc1a);

  \draw [arrow] (proc1) -- (dec6);
  \draw [arrow] (dec6) -- node[anchor=west] {yes} (proc2);
  \draw [arrow] (dec6) -- node[anchor=south] {no} (stop4);

  \draw [arrow] (proc2) -- (proc3);
  \draw [arrow] (proc3) -- (proc4);

  \draw [arrow] (proc4) |- (stop5a);
  \draw [arrow] (proc1a) -- (stop5a);

\end{tikzpicture}
\caption{Recognition of rational canal surfaces summarized.}\label{diagram}

\end{table}

\section{Conclusion}\label{Concl}

In this contribution we studied an interesting (and till now unsolved) theoretical problem, motivated by some technical applications (e.g. when implicit blends are constructed and then shall be characterized), i.e., how to recognize a rational canal surface from its defining polynomial equation. We designed an algorithm returning for rational canal surfaces also their squared medial axis transform which can be used for computing a rational parameterization of the given surface. The methods and approaches were presented on two commented examples. The investigation may be considered as another step towards the recognition of other implicitly given surfaces which play an important role in technical practice.

\section*{Acknowledgments}

The first author was supported by the project NEXLIZ, CZ.1.07/2.3.00/30.0038, which is co-financed by the European Social Fund and the state budget of the Czech Republic.
The work on this paper was supported by the European Regional Development Fund, project ``NTIS~--~New Technologies for the Information Society'', European Centre of Excellence, CZ.1.05/1.1.00/02.0090.

\bibliographystyle{ieeetr}
\bibliography{bibliography}{}

%

\end{document}